\newtheorem{theorem}{Theorem}
\newtheorem{lemma}{Lemma}
\newtheorem{definition}{Definition}
\newtheorem{remarkb}{Remark}
\newtheorem{openq}{Open question}
\newproof{proof}{Proof}
\newcommand\set[1]{\ensuremath{\{ #1 \} }}
\renewcommand\int[1]{\ensuremath{\llbracket#1\rrbracket}}
\newcommand\cur[1]{\ensuremath{{\cal{#1}}}}
\renewcommand{\O}{O}
\begin{document}

\title{Linearity is Strictly More Powerful than Contiguity for Encoding Graphs\tnoteref{t1}\tnoteref{t2}}

\tnotetext[t1]{This work was partially funded by a grant from R{\'e}gion Rh{\^o}ne-Alpes and by the delegation program of CNRS.\\
This work was partially funded by the Vietnam Institute for Advanced Study in Mathematics (VIASM) and by the Vietnam National Fondation for Science and Technology Developement (NAFOSTED).\\
This work was partially funded by Fondecyt Postdoctoral grant 3140527 and N{\'u}cleo Milenio Informaci{\'o}n y Coordinaci{\'o}n en Redes (ACGO).}
\tnotetext[t2]{This is a complete version of the extended abstract appeared in WADS 2015~\cite{CLP+15}.}

\author[ucbl]{Christophe Crespelle}
\ead{christophe.crespelle@inria.fr}

\author[ens]{Tien-Nam Le}
\ead{tien-nam.le@ens-lyon.fr}

\author[amu]{Kevin Perrot}
\ead{kevin.perrot@lif.univ-mrs.fr}

\author[vth]{Thi Ha Duong Phan}
\ead{phanhaduong@math.ac.vn}

\address[ucbl]{Universit\'e Claude Bernard Lyon 1 and CNRS, DANTE/INRIA,\\ LIP UMR CNRS 5668, ENS de Lyon, Universit\'e de Lyon and\\ Institute of Mathematics, Vietnam Academy of Science and Technology,\\ 18 Hoang Quoc Viet, Hanoi, Vietnam.}
\address[ens]{ENS de Lyon, Universit{\'e} de Lyon.}
\address[amu]{Universidad de Chile, DIM, CMM UMR CNRS 2807, Santiago, Chile and\\ Aix-Marseille Universit{\'e}, CNRS, LIF UMR 7279, 13288, Marseille, France.}
\address[vth]{Institute of Mathematics, Vietnam Academy of Science and Technology,\\ 18 Hoang Quoc Viet, Hanoi, Vietnam.}

\begin{keyword}
Graph encoding \sep Linearity \sep Contiguity \sep Cographs 
\end{keyword}


\begin{abstract}
Linearity and contiguity are two parameters devoted to graph encoding. Linearity is a generalisation of contiguity in the sense that every encoding achieving contiguity $k$ induces an encoding achieving linearity $k$, both encoding having size $\Theta(k.n)$, where $n$ is the number of vertices of $G$. In this paper, we prove that linearity is a strictly more powerful encoding than contiguity, {\em i.e.} there exists some graph family such that the linearity is asymptotically negligible in front of the contiguity. We prove this by answering an open question asking for the worst case linearity of a cograph on $n$ vertices: we provide an $O(\log n/\log\log n)$ upper bound which matches the previously known lower bound.
\end{abstract}

\maketitle

\section{Introduction}

One of the most widely used operation in graph algorithms is the \emph{neighbourhood query}: given a vertex $x$ of a graph $G$, one wants to obtain the list of neighbours of $x$ in $G$. The classical data structure that allows to do so is the adjacency lists. It stores a graph $G$ in $\O(n+m)$ space, where $n$ is the number of vertices of $G$ and $m$ its number of edges, and answers a neighbourhood query on any vertex $x$ in $\O(d)$ time, where $d$ is the degree of vertex $x$.
This time complexity is optimal, as long as one wants to produce the list of neighbours of $x$.
On the other hand, in the last decades, huge amounts of data organized in the form of graphs or networks have appeared in many contexts such as genomic, biology, physics, linguistics, computer science, transportation and industry. In the same time, the need, for industrials and academics, to algorithmically treat this data in order to extract relevant information has grown in the same proportions. For these applications dealing with very large graphs, a space complexity of $\O(n+m)$ is often very limiting. Therefore, as pointed out by~\cite{Turan1984}, finding compact representations of a graph providing optimal time neighbourhood queries is a crucial issue in practice. Such representations allow to store the graph entirely in memory while preserving the complexity of algorithms using neighbourhood queries. The conjunction of these two advantages has great impact on the running time of algorithms managing large amount of data.

One possible way to store a graph $G$ in a very compact way and preserve the complexity of neighbourhood queries is to find an order $\sigma$ on the vertices of $G$ such that the neighbourhood of each vertex $x$ of $G$ is an interval in $\sigma$. In this way, one can store the order $\sigma$ on the vertices of $G$ and assign two pointers to each vertex: one toward its first neighbour in $\sigma$ and one toward its last neighbour in $\sigma$. Therefore, one can answer adjacency queries on vertex $x$ simply by listing the vertices appearing in $\sigma$ between its first and last pointer. It must be clear that such an order on the vertices of $G$ does not exist for all graphs $G$. Nevertheless, this idea turns out to be quite efficient in practice and some compression techniques are precisely based on it~\cite{AD09,BV04,BV05,BSV09,MP10}: they try to find orders on the vertices that group the neighbourhoods together, as much as possible.

Then, a natural way to relax the constraints of the problem so that it admits a solution for a larger class of graphs is to allow the neighbourhood of each vertex to be split in at most $k$ intervals in order $\sigma$. The minimum value of $k$ which makes possible to encode the graph $G$ in this way is a parameter called \emph{contiguity} \cite{GGKS1995} and denoted by $cont(G)$. Another natural way of generalization is to use at most $k$ orders $\sigma_1,\ldots,\sigma_k$ on the vertices of $G$ such that the neighbourhood of each vertex is the union of exactly one interval taken in each of the $k$ orders.
This defines a parameter called the \emph{linearity} of $G$ \cite{CrespelleGambette2009}, denoted $lin(G)$.
The additional flexibility offered by linearity (using $k$ orders instead of just $1$) results in a greater power of encoding, in the sense that if a graph $G$ admits an encoding by contiguity $k$, using one linear order $\sigma$ and at most $k$ intervals for each vertex, it is straightforward to obtain an encoding of $G$ by linearity $k$: take $k$ copies of $\sigma$ and assign to each vertex one of its $k$ intervals in each of the $k$ copies of $\sigma$.

As one can expect, this greater power of encoding requires an extra cost: the size of an encoding by linearity $k$, which uses $k$ orders, is greater than the size of an encoding by contiguity $k$, which uses only $1$ order. Nevertheless, very interestingly, the sizes of these two encodings are equivalent up to a multiplicative constant. Indeed, storing an encoding by contiguity $k$ requires to store a linear ordering of the $n$ vertices of $G$, {\em i.e.} a list of $n$ integers, and the bounds of each of the $k$ intervals for each vertex, {\em i.e.} $2kn$ integers, the total size of the encoding being $(2k+1)n$ integers. On the other hand, the linearity encoding also requires to store $2kn$ integers for the bounds of the $k$ intervals of each vertex, but it needs $k$ linear orderings of the vertices instead of just one, that is $kn$ integers. Thus, the total size of an encoding by linearity $k$ is $3kn$ integers instead of $(2k+1)n$ for contiguity $k$ and therefore the two encodings have equivalent sizes.

Then the question naturally arises to know whether there are some graphs for which the linearity is significantly less than the contiguity. More formally, does there exist some graph family for which the linearity is asymptotically negligible in front of the contiguity? Or are these two parameters equivalent up to a multiplicative constant? This is the question we address here. Our results show that linearity is strictly more powerful than contiguity.

\medskip
\noindent\textbf{Related work.}
Only little is known about contiguity and linearity of graphs.
In the context of $0-1$ matrices, \cite{GGKS1995,WLZ2007} studied closed contiguity and showed that deciding whether an arbitrary graph has closed contiguity at most $k$ is NP-complete for any fixed $k\geq 2$.
For arbitrary graphs again, \cite{GavoillePeleg1999} (Corollary~3.4) gave an upper bound on the value of closed contiguity which is $n/4+\O(\sqrt{n\log n})$.
Regarding graphs with bounded contiguity or linearity, only the class of graphs having contiguity $1$, or equivalently linearity $1$, has been characterized, as being the class of proper (or unit) interval graphs~\cite{Roberts1968}.
For interval graphs and permutation graphs, \cite{CrespelleGambette2009} showed that both contiguity and linearity can be up to $\Omega(\log n/\log \log n)$. For cographs, a subclass of permutation graphs, \cite{CG14} showed that the contiguity can even been up to $\Omega(\log n)$ and is always $O(\log n)$, implying that both bounds are tight. The $O(\log n)$ upper bound consequently applies for the linearity (of cographs) as well, but \cite{CG14} only provides an $\Omega(\log n/\log \log n)$ lower bound.
Finally, let us mention for the sake of completeness that \cite{CG13} gave an algorithm that computes a constant ratio approximation of the contiguity of a cograph, as well as a corresponding encoding, in linear time.

\medskip
\noindent\textbf{Our results.}
Our main result (Theorem~\ref{th:lin-cont}) is to exhibit a family of graphs $G_h$, $h\geq 1$, such that the linearity of $G_h$ is asymptotically negligible in front of the contiguity of $G_h$. In order to do so, we prove (Theorem~\ref{th:upper}) that the linearity of a cograph $G$ on $n$ vertices is always $O(\log n/\log \log n)$. It turns out that this bound is tight, as it matches the previously known lower bound on the worst-case linearity of a cograph~\cite{CG14}.

\medskip
\noindent\textbf{Outline of the paper.}
Section~\ref{sec:prel} gives necessary background on the notions used throughout the article. Section~\ref{sec:fact} proves the key technical statement of our work, showing that the linearity of a cograph is dominated by the maximal height of a certain type of tree, called \emph{double factorial tree}, included in its cotree. From there, Section~\ref{sec:num} derives our main results: the tight upper bound on the linearity of cographs and the construction of a subfamily of cographs for which the linearity is negligible in front of the contiguity.

\section{Preliminaries.}\label{sec:prel}

All graphs considered here are finite, undirected, simple and loopless. In the following, $G$ is a graph, $V$ (or $V(G)$) is its vertex set and $E$ (or $E(G)$) is its edge set. We use the notation $G=(V,E)$ and $n$ stands for the cardinality $|V|$ of $V(G)$. 
An edge between vertices $x$ and $y$ will be arbitrarily denoted by $xy$ or $yx$. The (open) neighbourhood of $x$ is denoted by $N(x)$ (or $N_G(x)$) and its closed neighbourhood by $N[x]=N(x)\cup\set{x}$.
The subgraph of $G$ induced by the set of vertices $X \subseteq V$ is denoted by $G[X] = (X,\{xy \in E \mid x,y\in X\})$.

For a rooted tree $T$ and a node $u\in T$, the depth of $u$ in $T$ is the number of edges in the path from the root of $T$ to $u$ (the root has depth $0$). The \emph{height} of $T$, denoted by $h(T)$, is the greatest depth of its leaves. We employ the usual terminology for {\em children}, {\em father}, {\em ancestors} and {\em descendants} of a node $u$ in $T$ (the two later notions including $u$ itself), and denote by $\cur{C}(u)$ the set of children of $u$. The subtree of $T$ rooted at $u$, denoted by $T_u$, is the tree induced by node $u$ and all its descendants in $T$.
A \emph{monotonic path} $C$ of a rooted tree $T$ is a path such that there exists some node $u \in C$ such that all nodes of $C$ are ancestors of $u$. The unique node of $C$ which has no parent in $C$ is called the root of the monotonic path.

In the following, the notion of \emph{minors} of rooted trees is central. This is a special case of minors of graphs (see e.g. \cite{Lovasz2006}), for which we give a simplified definition in the context of rooted trees.
The \emph{contraction of edge} $uv$ in a rooted tree $T$, where $u$ is the parent of $v$, consists in removing $v$ from $T$ and assigning its children (if any) to node $u$.

\begin{definition}[Minor]\label{def:minor}
A rooted tree $T'$ is a \emph{minor} of a rooted tree $T$ if it can be obtained from $T$ by a sequence of edge contractions. 
\end{definition}

\subsection{Linearity of graphs}

There are actually two notions of linearity (as well as for contiguity, see~\cite{CG14} for definitions) depending on whether one uses the open neighbourhood $N(x)$ or closed neighbourhood $N[x]$.

\begin{definition}[$p$-line-model]\label{def:pline}
A \emph{closed $p$-line-model} (resp. open $p$-line-model) of a graph $G=(V,E)$ is a tuple $(\sigma_1,\ldots
,\sigma_p)$ of linear orders on $V$ such that $\forall v\in V, \exists (I_1,\ldots , I_p)$ such that $\forall i\in \int{1,p},$ $I_i \text{ is an interval of } \sigma_i$ and $N[x]=\bigcup_{1\leq i\leq p} I_i$ (resp. $N(x)=\bigcup_{1\leq i\leq p} I_i$).\\
The \emph{closed linearity} (resp. \emph{open linearity}) of $G$, denoted by $lin(G)$ (resp. $lin^o(G)$), is the minimum integer
$p$ such that there exists a closed $p$-line-model (resp. open $p$-line-model) of $G$.
\end{definition}

\begin{remarkb}
In the definition of a $p$-line-model
, the set of vertices of the intervals $I_i$ assigned to a vertex $x$ are not necessarily disjoint. They are only required to cover the neighbourhood of $x$ while being included in it.
\end{remarkb}

In the rest of the paper, we consider only closed linearity and closed contiguity. But, from~\cite{CG14} and from the inequalities below, for both parameters, the closed notion and the open notion are equivalent. Therefore, the bounds we obtain here (which hold up to multiplicative constants) hold indifferently for open notions and closed notions.

\begin{lemma}\label{lem:ineq}
For an arbitrary graph $G$, we have the following inequalities: $lin(G)-1\leq lin^o(G)\leq 2 lin(G)$.
\end{lemma}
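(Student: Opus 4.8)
The plan is to prove the two inequalities separately, in each case starting from an optimal line-model for one notion and transforming it into a line-model for the other, while carefully tracking how the open/closed distinction changes the number of intervals needed for each vertex.

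\medskip

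\noindent\textbf{Proving $lin^o(G)\le 2\,lin(G)$.}
First I would take an optimal closed $p$-line-model $(\sigma_1,\ldots,\sigma_p)$ of $G$ with $p=lin(G)$, so that for every vertex $x$ the closed neighbourhood $N[x]$ is covered by one interval in each order. The only obstruction to reusing the same orders for the open neighbourhood is the vertex $x$ itself: deleting $x$ from an interval $I_i$ of $\sigma_i$ can split $I_i$ into two intervals (the part before $x$ and the part after $x$). So from each of the $p$ orders I get at most two intervals per vertex whose union is $N(x)=N[x]\setminus\set{x}$. To package this as an \emph{open} line-model, I would simply duplicate the orders: take $\sigma_1,\sigma_1,\sigma_2,\sigma_2,\ldots,\sigma_p,\sigma_p$, a tuple of $2p$ orders, and in the two copies of $\sigma_i$ place respectively the ``before $x$'' piece and the ``after $x$'' piece. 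This yields an open $2p$-line-model, hence $lin^o(G)\le 2\,lin(G)$.

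\medskip

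\noindent\textbf{Proving $lin(G)-1\le lin^o(G)$.}
For the other direction I would start from an optimal open $q$-line-model $(\tau_1,\ldots,\tau_q)$ with $q=lin^o(G)$, so each $N(x)$ is covered by one interval per order. Now I must cover $N[x]=N(x)\cup\set{x}$ instead, i.e.\ I need to additionally include $x$ itself in the union of intervals. The natural idea is to add a single extra order, say a copy $\tau_{q+1}$ of $\tau_1$, and assign to each vertex $x$ the singleton interval $\set{x}$ in this new order; then in the first $q$ orders I keep the intervals covering $N(x)$, so that the union over all $q+1$ orders is exactly $N(x)\cup\set{x}=N[x]$. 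The subtlety I should check is that in $\tau_i$ for $i\le q$ the interval still needs to be a genuine interval of the full order, which it is since we did not modify the orders themselves, only what each vertex is allowed to cover; adding the isolated point $x$ via the extra order costs exactly one additional order. This gives a closed $(q+1)$-line-model, hence $lin(G)\le lin^o(G)+1$, which is the claimed $lin(G)-1\le lin^o(G)$.

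\medskip

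\noindent\textbf{Main obstacle.}
The routine part is the bookkeeping; the one point that needs genuine care is the first inequality, where removing the vertex $x$ from a closed interval can break it into two pieces and one must be certain that two orders suffice to absorb this splitting uniformly across \emph{all} vertices simultaneously (the ``before'' and ``after'' parts land in the two respective copies consistently). I would double-check that a single duplication handles every vertex at once, rather than needing a vertex-dependent number of extra orders, and that degenerate cases (an interval lying entirely on one side of $x$, or $x$ having empty open neighbourhood) are covered so that we never exceed $2p$ orders. The second inequality is more immediate, the only thing to verify being that a singleton is always a valid interval and that the extra order is counted exactly once.
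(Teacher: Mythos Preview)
Your proposal is correct and matches the paper's proof essentially line for line: for $lin^o(G)\le 2\,lin(G)$ the paper also duplicates each order into a ``left'' and ``right'' copy and assigns to $x$ the before-$x$ and after-$x$ pieces of its closed interval, and for $lin(G)\le lin^o(G)+1$ it likewise appends one extra order in which every vertex gets the singleton interval $\{x\}$. The only cosmetic difference is that the paper does not bother naming the extra order as a copy of $\tau_1$ (any order works), and it does not spell out the degenerate cases you mention, which are indeed harmless.
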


\begin{proof}
The first inequality comes from the fact that an open model can always be turned into a closed model having one additional order $\sigma'$ and where each vertex $x$ of $G$ is assigned a singleton interval of $\sigma'$ equal to $\set{x}$.
Conversely, one can transform a closed model into an open model by duplicating every order $\sigma$ of the closed model into two copies $\sigma_{l}$ and $\sigma_{r}$ in the open model. Then, for each vertex $x$, the interval assigned to $x$ in $\sigma_{l}$ is the left part of the interval ({\em i.e.} vertices of the interval which are before $x$) assigned to $x$ in $\sigma$. And the interval assigned to $x$ in $\sigma_{r}$ is the right part of its interval in $\sigma$.
\end{proof}

Finally, we give two basic properties of linearity that we use in the following.

\begin{remarkb}\label{remark:induced}
The linearity of an induced subgraph of a graph $G$ is at most equal to the linearity of $G$ itself.
\end{remarkb}

Indeed, restricting a $p$-line-model of a graph $G$ to a subset $X$ of its vertices results in a $p$-line-model of $G[X]$.

\begin{remarkb}\label{remark:parallel}
The linearity of the disjoint union $G_\cur{F}$ of a (finite) collection $\cur{F}$ of graphs is the maximum of the linearities of the graphs in $\cur{F}$.
\end{remarkb}

This comes from the fact that a model of $G_\cur{F}$ can be built simply by appending the orders used for the models of the graphs in $\cur{F}$.

\subsection{Cographs}

There are several characterizations of the class of \emph{cographs}. They are often defined as the graphs that do not admit the $P_4$ (path on $4$ vertices) as induced subgraph. Equivalently, they are the graphs obtained from a single vertex under the closure of the parallel composition and the series composition.
The parallel composition of two graphs $G_1=(V_1,E_1)$ and $G_2=(V_2,E_2)$ is the disjoint union of $G_1$ and $G_2$, {\em i.e.}, the graph $G_{par}=\big(V_1 \cup V_2, E_1 \cup E_2\big)$.
The series composition of two graphs $G_1$ and $G_2$ is the disjoint union of $G_1$ and $G_2$ plus all possible edges from a vertex of $G_1$ to one of $G_2$, {\em i.e.}, the graph $G_{ser}\big(V_1 \cup V_2, E_1 \cup E_2 \cup \{xy \mid x \in V_1, y \in V_2\}\big)$. These operations can naturally be extended to a finite number of graphs.

This gives a very nice representation of a cograph $G$ by a tree whose leaves are the vertices of the graph and whose internal nodes (non-leaf nodes) are labelled $P$, for parallel, or $S$, for series, corresponding to the operations used in the construction of $G$. It is always possible to find such a labelled tree $T$ representing $G$ such that every internal node has at least two children, no two parallel nodes are adjacent in $T$ and no two series nodes are adjacent. This tree $T$ is unique~\cite{CLS81} and is called the \emph{cotree} of $G$. See the example on Figure~\ref{fig:cograph}.
Note that the subtree $T_u$ rooted at some node $u$ of cotree $T$ also defines a cograph, denoted $G_u$, and then $V(G_u)$ is the set of leaves of $T_u$.
The adjacencies between vertices of a cograph can easily be read on its cotree, in the following way.

\begin{figure}
  \centering
  $
    \vcenter{\hbox{\includegraphics{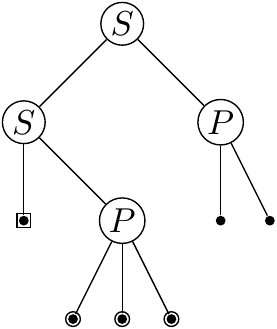}}}
    \hspace{1cm}
    \vcenter{\hbox{\includegraphics{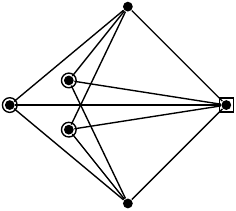}}}
    \hspace{1cm}
    \vcenter{\hbox{\includegraphics{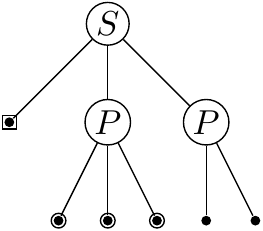}}}
  $
  \caption{Example of tree (left), the cograph it represents (center), and the associated cotree (right). Some vertices are decorated in order to ease the reading.}
  \label{fig:cograph}
\end{figure}

\begin{remarkb}\label{rem:adjcotree}
Two vertices $x$ and $y$ of a cograph $G$ having cotree $T$ are adjacent iff the least common ancestor $u$ of leaves $x$ and $y$ in $T$ is a series node. Otherwise, if $u$ is a parallel node, $x$ and $y$ are not adjacent.
\end{remarkb}

Note that in all the paper, we abusively extend the notion of linearity to cotrees, referring to the linearity of their associated cograph.

\subsection{Comparing power of encodings}

For a graph encoding scheme $Enc$ and a graph $G$, we denote $|Enc(G)|$ the minimum size of an encoding of $G$ based on $Enc$ (there are in general, like here, different encodings based on the same encoding scheme and they do not have necessarily the same size, some being more efficient than others).
We now give a formal definition for an encoding scheme to be strictly more powerful than another one.

\begin{definition}[Strictly more powerful encoding]\label{def:encod}
Let $Enc_1$ and $Enc_2$ be two graph encoding schemes. We say that $Enc_2$ is \emph{at least as powerful as} $Enc_1$ iff there exists $\alpha>0$ such that for all graphs $G$, $|Enc_2(G)|\leq\alpha |Enc_1(G)|$.
Moreover, we say that $Enc_2$ is \emph{strictly more powerful than} $Enc_1$ iff $Enc_2$ is at least as powerful as $Enc_1$ and the converse is not true.
\end{definition}

Note that, $Enc_1$ is not at least as powerful as $Enc_2$ iff there exists a series of graphs $G_h$, $h\geq 1$, such that $|Enc_1(G_h)|/|Enc_2(G_h)|$ tends to infinity when $h$ tends to infinity.
In the introduction, we showed that the encoding schemes $LinEnc$ and $ContEnc$ based on linearity and contiguity respectively are such that, for any graph $G$ on $n$ vertices, we have $2\, n\, cont(G)\leq |ContEnc(G)|\leq 3\, n\, cont(G)$ and $|LinEnc(G)|=3\, n\, lin(G)$. Since $lin(G)\leq cont(G)$, this gives $|LinEnc(G)|\leq \frac{3}{2} |ContEnc(G)|$, showing that linearity is an encoding at least as powerful as contiguity according to Definition~\ref{def:encod}.
In addition, the previous inequalities also imply that $\frac{2}{3} cont(G)/lin(G)\leq |ContEnc(G)|/|LinEnc(G)|\leq cont(G)/lin(G)$. Altogether, we obtain the following remark.

\begin{remarkb}\label{rem:more-power}
Linearity is an encoding at least as powerful as contiguity. Moreover, it is strictly more powerful iff there exists a series of graphs $G_h$, $h\geq 1$, such that $|cont(G_h)|/|lin(G_h)|$ tends to infinity when $h$ tends to infinity.
\end{remarkb}

\section{Linearity of a cograph and factorial rank of its cotree}\label{sec:fact}

In this section, we show that the linearity of a cograph is upper bounded by the size of some maximal structure contained in its cotree, more precisely by the height of a maximal double factorial tree (defined below), which we call the factorial rank of a cotree. This result is interesting by itself as it provides a structural explanation of the difficulty of encoding a cograph by linearity. For our concern, the interesting point is that the number of leaves of a double factorial tree of height $h$ is $\Omega(h!)$. Combined with this fact, the result presented in this section (Lemma~\ref{lem:rank-lin}) will allow us to derive in next section the desired $O(\log n/\log \log n)$ upper bound on the linearity of cographs. We start by some necessary definitions.

\begin{definition}[Double factorial tree]\label{def:fact-tree}
The \emph{double factorial tree} $F^h$ of height $h$ is defined inductively as follows:
\begin{itemize}
\item $F^0$ is the (unique) tree of height $0$, {\em i.e.}, the tree made of one single leaf node, and
\item for $h\geq 1$, $F^h$ is the tree whose root has $2h+1$ children $u$, whose subtrees $F_u$ are precisely $F^{h-1}$.
\end{itemize}
\end{definition}

\begin{definition}[Factorial rank]\label{def:fact-rank}
The \emph{factorial rank} of a rooted tree $T$ (see example on Figure~\ref{fig:factorial}), 
denoted $factrank(T)$, is the maximum height of a double factorial tree being a minor of $T$, that is:\\
$factrank(T) = \max \{h(T')\ |\ T' \text{ is a double factorial tree and a minor of } T\}.$
\end{definition}

\begin{figure}
  \centering
  \includegraphics{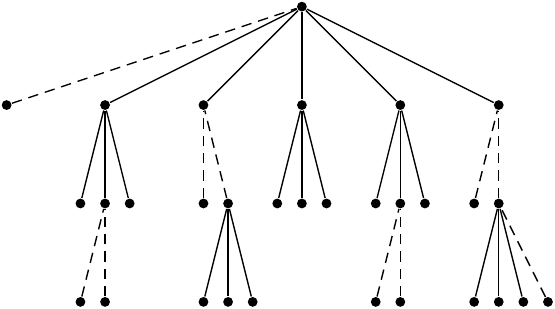}
  \caption{A tree (plain plus dashed arcs) of factorial rank $2$. Plain arcs exhibit a double factorial minor of height $2$, obtained by contracting dashed edges.}
  \label{fig:factorial}
\end{figure}

We extend the notion of factorial rank to a node $u$ in a tree $T$, referring to the factorial rank of its subtree $T_u$.
The case where the children of node $u$ all have factorial rank strictly less than the one of $u$ will play a key role.

\begin{definition}[Minimally of factorial rank $k$]
Let $u$ be a node of a tree $T$. If $u$ has factorial rank $k$ and if all the children of $u$ have factorial rank at most $k-1$, we say that $u$ is \emph{minimally of factorial rank} $k$.
\end{definition}

\begin{figure}
  \centering
  $
    \vcenter{\hbox{\includegraphics{./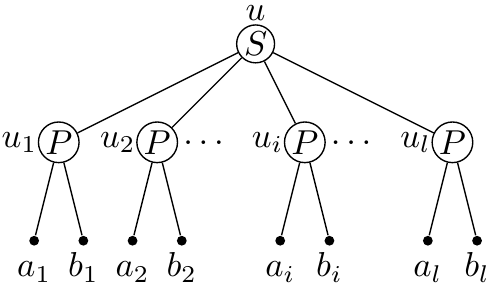}}}
    \hspace{1cm}
    \vcenter{\hbox{\includegraphics{./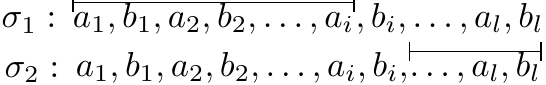}}}
  $
  \caption{Cotree of $G_u$ (left) and example of the intervals for $a_i$ in $\sigma_1$ and $\sigma_2$ (right).}
  \label{fig:initialisation}
\end{figure}

We are now ready to state the result of this section, which claims that the linearity of a cograph is linearly bounded by the factorial rank of its cotree.

\begin{lemma}\label{lem:rank-lin}
Let $T$ be a cotree and let $u \in T$ of factorial rank $k\geq 0$. Then, $lin(G_u)\leq 2k+3$.
Moreover, if $k\geq 1$ and $u$ is minimally of factorial rank $k$, then $lin(G_u)\leq 2k+2$.
\end{lemma}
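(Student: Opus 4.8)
The plan is to prove both inequalities simultaneously by induction on the number of nodes of $T_u$, using the factorial rank only to \emph{measure} the resulting bound. For the base case, note that $factrank(u)=0$ forces $T_u$ to have at most two leaves: a rooted tree with three or more leaves always admits the claw $F^1$ as a minor (repeatedly contract an internal child of the root until the root acquires three children, then contract each hanging subtree to a single leaf). Hence $G_u$ has at most two vertices and $lin(G_u)\le 1\le 3$, which settles $k=0$ (the minimality clause being vacuous there). For the inductive step I would split on the label of $u$.

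If $u$ is a parallel node, Remark~\ref{remark:parallel} gives $lin(G_u)=\max_i lin(G_{v_i})$ over the children $v_i$. When $u$ is minimally of rank $k$, every child has rank at most $k-1$, so the induction hypothesis yields $lin(G_{v_i})\le 2(k-1)+3=2k+1\le 2k+2$; when $u$ is not minimal, some child has rank $k$ and the induction hypothesis (applied to the strictly smaller subtrees) gives $lin(G_{v_i})\le 2\,factrank(v_i)+3\le 2k+3$ for every child. Thus the parallel case costs nothing and is absorbed entirely by Remark~\ref{remark:parallel}, so all the difficulty is concentrated in the series case.

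So assume $u$ is a series node, i.e. $G_u$ is the join of the cographs $G_{v_i}$. The governing observation is that each $x\in V(G_{v_i})$ is adjacent to \emph{every} vertex outside $G_{v_i}$; consequently, if in some linear order the blocks $V(G_{v_1}),\dots,V(G_{v_m})$ appear consecutively, then the part of $N[x]$ lying outside its own block splits into a prefix and a suffix of that order, each a single interval contained in $N[x]$. This suggests the construction: fix a common number $q$ of orders per child (padding a child that needs fewer), and for each $j$ let $\sigma_j$ be the concatenation over all children of their $j$-th orders. The orders $\sigma_1,\dots,\sigma_q$ already realise every \emph{internal} neighbourhood $N_{G_{v_i}}[x]$, and adding one order listing the blocks left-to-right together with one listing them right-to-left realises, for every $x$, the external prefix and suffix. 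This yields the clean but lossy estimate $lin(G_u)\le \max_i lin(G_{v_i})+2$.

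With this estimate alone we would get only $2k+3$ in the minimal case and, more seriously, a $+2$ incurred at \emph{every} series node along a root-to-leaf path; for thin cotrees — of small factorial rank but large height — this overshoots the target enormously. The heart of the argument, and the step I expect to be hardest, is therefore to make the cost of a series node depend on the factorial rank rather than on the height: crossing from children of rank at most $k-1$ up to a node minimally of rank $k$ should cost only $+1$ (yielding $2k+2$ from the children's $2k+1$), while a series node already possessing a child of rank $k$ should cost essentially nothing beyond that child's bound. This cannot be had for free, since a node minimally of rank $k$ may have \emph{arbitrarily many} children of rank $k-1$: collecting many copies of $F^{k-1}$ under $u$ produces only $F^{k}$, never $F^{k+1}$, as $u$ is the sole node able to gather $2k+1$ of them. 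Hence one must genuinely \emph{fuse} a boundary order with the internal coverage rather than spend it separately. My plan is to strengthen the induction hypothesis so that every model produced carries two distinguished boundary orders in which each vertex's interval can be taken to reach the corresponding end of its block; the left-to-right external order can then be amalgamated with such a boundary order, covering an external prefix together with one internal interval by a single interval and thereby saving an order. The delicate points are to preserve this boundary property through parallel composition (concatenation of blocks, Remark~\ref{remark:parallel}) and through series composition, and to verify that the factorial-rank minimality is exactly the condition guaranteeing a single extra order suffices irrespective of the number of rank-$(k-1)$ children; the intervals drawn for the vertices $a_i$ in $\sigma_1$ and $\sigma_2$ in Figure~\ref{fig:initialisation} are presumably the base case of this boundary construction.
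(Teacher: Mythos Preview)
Your framework is sound up to and including the diagnosis that the naive $+2$ per series node is fatal, and your instinct to strengthen the hypothesis with boundary orders is the right direction---the paper uses the same device (for each vertex $x$ an index $ind(x)$ with the interval of $x$ in $\sigma_{ind(x)}$ reaching the right end of the order). The genuine gap is that this boundary property does \emph{not} survive parallel composition: when you concatenate the children's orders, a vertex's interval reaches only the end of its own block, and since under a parallel node it has no external neighbours you cannot legally extend it further. Because cotrees alternate series and parallel labels, any chain of same-rank nodes passes through parallel nodes, so a child-by-child induction loses the boundary property precisely one step before you need it again. Your two distinguished orders therefore cannot be carried up the tree, and without them the non-minimal series case (a series $u$ with a child of the same rank $k$) still costs an extra order each time, blowing the bound along thin cotrees exactly as you feared.

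The paper sidesteps this by abandoning the child-by-child induction inside each rank level. For a node $u$ of rank $k-1$, it looks at the \emph{whole} set $U_{k-1}$ of rank-$(k{-}1)$ descendants of $u$ at once, and in particular at the minimal ones $u_1,\dots,u_l$; the crucial counting fact is $l\le 2k$, since otherwise contracting $T_u$ so that $u_1,\dots,u_l$ become children of $u$ exhibits $F^k$ as a minor. The subtree of $T_u$ induced by $U_{k-1}$ is then partitioned into $l\le 2k$ monotonic root-to-$u_i$ paths $C_1,\dots,C_l$, and one order $\sigma_i$ is dedicated to each path: by ordering the children of every node on $C_i$ so that the on-path child comes first at series nodes and last at parallel nodes, the external neighbourhood of every vertex hanging off $C_i$ becomes one or two intervals of $\sigma_i$, one of which always contains the last element of $\sigma_i$. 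This \emph{creates} the $ind(x)$ property from scratch at each rank level, rather than inheriting it through parallel nodes from below; the inequality $l\le 2k$ is exactly where the factorial rank enters and bounds the number of dedicated orders independently of the height. One extra order $\sigma_{2k+1}$ mops up the residual second intervals and the last internal order of each $G_{u_i}$, giving a $(2k{+}1)$-model for any rank-$(k{-}1)$ node with the boundary property in place; your $+1$ argument (extend the $ind(x)$-interval rightward, add one order for the left prefix) then gives $2k+2$ for a node minimally of rank $k$.
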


\begin{proof}
We prove the result by induction. For $k\geq 1$, the induction hypothesis $H(k)$ is formulated as follows: $H(k)=$``all nodes of factorial rank $j\leq k-1$ have linearity at most $2j+3$; and all nodes which are minimally of factorial rank $k$ ({\em i.e.}, whose children have factorial rank at most $k-1$) have linearity at most $2k+2$".

\medskip
\noindent\textbf{Initialisation step.}\\
For the initialisation of our recursion, {\em i.e.} for $k=1$, we must show that if $u$ has factorial rank $0$, then $lin(G_u)\leq 2\times 0+3=3$, and that if $u$ is minimally of factorial rank $1$, then $lin(G_u)\leq 2\times 1 + 2=4$.\\
Firstly, since every internal node of a cotree has at least two children, if $u$ has factorial rank $0$, then $u$ is a leaf of $T$ or $u$ is an internal node having exactly two leaf children (in all other cases, we can find $F^1$ as a minor of $T_u$). Then, it is straightforward that $lin(G_u)=1\leq 3$.\\
Now consider a node $u$ which is minimally of factorial rank $1$, that is $u$ has factorial rank $1$ and all its children have factorial rank at most $0$. If $u$ is a parallel node, then, from Remark~\ref{remark:parallel}, its linearity is the maximum of the linearities of its children, which is $1$ in this case according to what precedes. Thus, we have $lin(G_u)\leq 4$.
If $u$ is a series node, denote its children by $u_1,u_2,\ldots,u_l$. Since all the children of $u$ have factorial rank $0$, as mentioned previously, they are either leaves of $T$ or internal nodes having exactly two leaf children. We consider the case where all of them are internal nodes having two leaf children and we denote $a_i,b_i$ the two leaf children of $u_i$, for $1\leq i\leq l$. We show that in this case, the linearity of $G_u$ is at most $2$ (and so $\leq 4$) by exhibiting a $2$-line-model $(\sigma_1,\sigma_2)$ for $G_u$. As, in the other cases, the graph $G_u$ is an induced subgraph of the graph $G_u$ we consider here, it follows from Remark~\ref{remark:induced} that its linearity is also at most 2 (and so $\leq 4$).\\
Arguments of this paragraph are illustrated on Figure~\ref{fig:initialisation}. For $\sigma_1$ and $\sigma_2$, we use the same order on the vertices of $G_u$, defined as $\sigma_1=\sigma_2=a_1,b_1,a_2,b_2,\ldots,a_l,b_l$. For any $i\in\int{1,l}$, the interval associated to $a_i$ in $\sigma_1$ is the set of vertices less or equal to $a_i$ in $\sigma_1$ and the interval associated to $b_i$ in $\sigma_1$ is the set of vertices greater or equal to $b_i$ in $\sigma_1$. In $\sigma_2$, the interval associated to $a_i$ is the set of vertices strictly greater than $b_i$ in $\sigma_2$ and the interval associated to $b_i$ is the set of vertices strictly less than $a_i$ in $\sigma_2$.\\

\medskip
\noindent\textbf{Induction step.}\\
We consider an integer $k\geq 2$ such that $H(k-1)$ is true, which means in particular that all nodes minimally of rank $k-1$ can be encoded using $2k$ orders. We then show $H(k)$ in two steps: first, we prove that any node $u$ of factorial rank $k-1$ (not necessarily minimally) can be encoded using one more order ({\em i.e.} $2k+1$ orders instead of $2k$ for nodes minimally of rank $k-1$), then we prove that adding again one more order ({\em i.e.} using $2k+2$ orders), we can also encode any node $v$ which is minimally of factorial rank $k$.

\begin{figure}
  \centering
  \includegraphics{./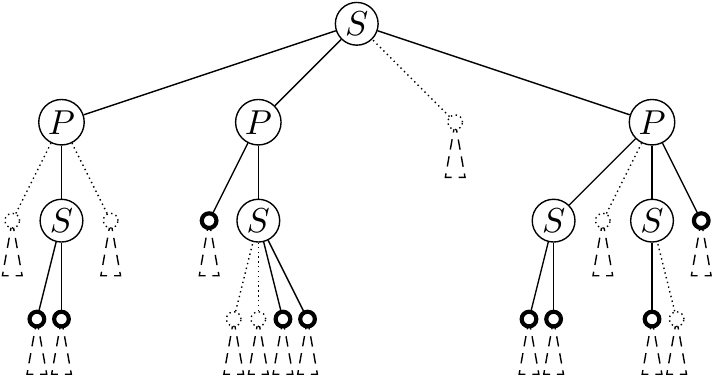}
  \caption{Illustration of the case where $u$ is of factorial rank $k-1$. The top series node is $u$, and the whole cotree is $T_u$. The nodes depicted with non-dotted lines belong to $U_{k-1}$ and the nodes depicted with (non-dotted) bold lines belong to $U_{min}$. 
The tree made of non-dotted nodes and edges is $T'_u$ and dotted nodes belong to $U_{\leq k-2}$. Dashed triangles are remaining parts of $T_u$: subcotrees rooted at nodes in $U_{min} \cup U_{\leq k-2}$.}
  \label{fig:induction-cotree1}
\end{figure}

\medskip
\noindent\textbf{$1^{st}$ step: node $u$ of factorial rank $k-1$.}\\
In order to describe a $(2k+1)$-line-model of $G_u$ we need to distinguish different parts of $T_u$ (see illustration on Figure~\ref{fig:induction-cotree1}).
Let $U_{k-1}$ be the subset of nodes of $T_u$ that have factorial rank $k-1$. If $U_{k-1}$ is reduced to $\set{u}$, then $u$ is minimally of factorial rank $k-1$ and the induction hypothesis allows to conclude without proving anything else. Otherwise, denote $U_{min}=\set{u_1,u_2,\ldots,u_l}\subsetneq U_{k-1}$, where $l\geq 1$, the subset of nodes of $U_{k-1}$ that are minimal for the ancestor relationship ({\em i.e.}, lowest in the cotree). By definition, these elements do not contain node $u$ and are incomparable for the ancestor relationship. Then, one can build a minor of $T_u$, by a sequence of edge contractions, where the set of children of $u$ is exactly $U_{min}$. It follows that $|U_{min}|=l\leq 2k$, as otherwise $u$ would be of factorial rank $k$.
By definition again, all the children of the nodes of $U_{min}$ have factorial rank at most $k-2$, and then the nodes of $U_{min}$ are minimally of rank $k-1$. By induction hypothesis, it follows that for all $i\in\int{1,l}$, $u_i$ admits a $2k$-line-model for which we denote $\sigma_{j}(u_i)$, with $1\leq j\leq 2k$, its $2k$ orders.

We denote $T'_u$ the subtree of $T_u$ induced by the set of nodes $U_{k-1}$ (by definition, $U_{min}\subseteq T'_u$). We also denote $U_{\leq k-2}$ the set of nodes of $T_u\setminus T'_u$ whose parent is in $T'_u\setminus U_{min}$. Nodes of $U_{\leq k-2}$ have, by definition, rank at most $k-2$ and it follows from the induction hypothesis that they admit a $(2k-1)$-line-model. Then, for a node $w\in U_{\leq k-2}$, we again denote $\sigma_j(w)$, with $1\leq j\leq 2k-1$, the $2k-1$ orders of such a model.
In addition , we use an arbitrary partition $\cur{P}$ of the nodes of $T'_u$ into $l$ monotonic paths $C_i$ such that for all $i\in\int{1,l}$, $u_i\in C_i$ (see Figure \ref{fig:induction-cotree2}). Partition $\cur{P}$ naturally induces a generalised partition (some parts may be empty) of $U_{\leq k-2}$ whose parts are denoted $U^i_{\leq k-2}$, with $1\leq i\leq l$: $U^i_{\leq k-2}$ is the subset of nodes of $U_{\leq k-2}$ whose parent belongs to $C_i\setminus\set{u_i}$.

\begin{figure}
  \centering
  \includegraphics{./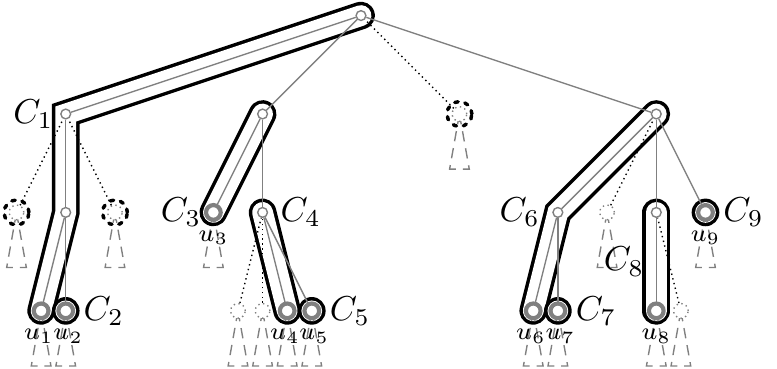}
  \caption{Example of partition into monotonic paths in the case where $u$ is of factorial rank $k-1$, for the cotree of Figure \ref{fig:induction-cotree1}. The three dot circled nodes of $U_{\leq k-2}$ form the set $U^1_{\leq k-2}$.}
  \label{fig:induction-cotree2}
\end{figure}

We can now describe the $2k+1$ orders $(\sigma_j)_{1\leq j\leq 2k+1}$ of the model we build for $G_u$. Importantly, note that $V(G_w)$, $w\in U_{min} \cup U_{\leq k-2}$, is a partition of $V(G_u)$. In our construction, $V(G_w)$ will always be an interval of $\sigma_j$ for all $w\in U_{min} \cup U_{\leq k-2}$ and all $j\in\int{1,2k+1}$. Then, the description of $\sigma_j$ is in two steps: we first give the order, denoted $\pi_j$, in which the intervals of nodes $w\in U_{min} \cup U_{\leq k-2}$ appear in $\sigma_j$ and then, for each $w$, we give the order, denoted $\sigma^w_j$, in which the vertices of $G_w$ appear in this interval. The description of orders $\pi_j$ will be done by choosing a local order on the children of each node of $U_{k-1}\setminus U_{min}$. Then $\pi_j$ is defined as the unique order on $U_{min} \cup U_{\leq k-2}$ respecting all the chosen local orders, {\em i.e.} such that for any $v,v'\in U_{min} \cup U_{\leq k-2}$, if $v$ and $v'$ have the same parent $z$ and if $v$ comes before $v'$ in the order chosen on children of $z$, then all descendants of $v$ come before all descendants of $v'$ in $\pi_j$.

To fully describe the $(2k+1)$-line-model of $u$, we must also assign to each vertex $x$ one interval of its neighbours in each of the orders of the model, in such a way that these intervals entirely cover the neighbourhood of $x$. In order to help our analysis, we distinguish between the \emph{external neighbourhood} of vertex $x$, which is defined as $N[x]\setminus V(G_w)$ (or equivalently $N(x)\setminus V(G_w)$, as $x\in V(G_w)$), where $w$ is the unique node of $U_{min} \cup U_{\leq k-2}$ being an ancestor of leaf $x$ in $T_u$, and its \emph{internal neighbourhood} which is defined as $N[x]\cap V(G_w)$. Our construction starts with the description of the $2k$ first orders of the model, which we use to encode the majority of adjacencies of $G_u$, and finishes with the description of order $\sigma_{2k+1}$ which is used to encode the remaining adjacencies.

For $j\in\int{1,2k}$, the purpose of order $\sigma_j$ is to satisfy the external neighbourhoods of vertices of $G_w$ for $w\in \set{u_j}\cup U^j_{\leq k-2}$. It entirely succeeds to do so for $u_j$ and encodes only one part (out of the two parts that we distinguish in the following) of the external neighbourhoods of $V(G_w)$ for nodes $w\in U^j_{\leq k-2}$, the remaining part being encoded in $\sigma_{2k+1}$. Then, for each $w\in \set{u_j}\cup U^j_{\leq k-2}$, the internal neighbourhoods of vertices of $G_w$ are encoded in the remaining $2k-1$ orders of $(\sigma_j)_{1\leq j\leq 2k}$. It is enough for $w\in U^j_{\leq k-2}$, since they admit a $(2k-1)$-line-model by recursion hypothesis, but one order is missing for $u_j$ which is minimally of rank $k-1$ and is then only guaranteed to admit a $2k$-line-model by recursion hypothesis. Again, the missing order will be found in $\sigma_{2k+1}$.

\medskip
\noindent\textbf{External neighbourhoods and choice of $\pi_j$'s.}
Let $j\in\int{1,2k}$, in this paragraph, we define the order $\pi_j$ in which the intervals of vertices of $G_w$ appear in $\sigma_j$, for $w\in U_{min} \cup U_{\leq k-2}$. If $j>l$, the order $\pi_j$ we choose does not matter, any arbitrary order is suitable. However, if $j\leq l$, the purpose of order $\pi_j$ is to satisfy the external adjacencies of the vertices of $G_{w}$ for any node $w\in \set{u_j}\cup U^j_{\leq k-2}$ (see Figure \ref{fig:induction-cotree2}). In this case, as explained above, we define $\pi_j$ by choosing an order for the children of $u'$ for each node $u'$ of $U_{k-1}\setminus U_{min}$. If $u'$ is an ancestor of $u_j$ and if $u'$ is a parallel node, we choose an order for the children of $u'$ such that the (unique) child of $u'$ which is an ancestor of $u_j$ is the last child in the order (any such order being suitable). If $u'$ is an ancestor of $u_j$ and $u'$ is a series node, we choose an order such that the child of $u'$ which is an ancestor of $u_j$ is the first child of the order (any such order being suitable). And finally, if $u'$ is not an ancestor of $u_j$, then any order on its children is suitable for $\pi_j$. This way, the external neighbourhood of any vertex $x$ of $G_{u_j}$ is exactly the interval of $\sigma_j$ formed by the vertices on the right of the interval of $G_{u_j}$ (containing the last vertex of $\sigma_j$), and this is the interval assigned to $x$ in $\sigma_j$. Indeed, the vertices on the right of the interval of $G_{u_j}$ have a series least common ancestor with node $u_j$ and are therefore adjacent to all the vertices of $G_{u_j}$, while the vertices on the left of the interval of $G_{u_j}$ have a parallel least common ancestor with node $u_j$ and are then non-adjacent to the vertices of $G_{u_j}$ (see example on Figure \ref{fig:induction-cotree3}). As a conclusion of this paragraph, thanks to this choice of $\pi_j$'s, the external neighbourhood of all the vertices of $G_{u_j}$, for all $j\in\int{1,l}$, is entirely encoded in order $\sigma_j$. Also note that the interval associated to the vertices of $G_{u_j}$ in $\sigma_j$, which is the same for all vertices of $G_{u_j}$, contains the last vertex of order $\sigma_j$. We use this property in the $2^{nd}$ step of the induction.

\begin{figure}
  \centering
  \includegraphics{./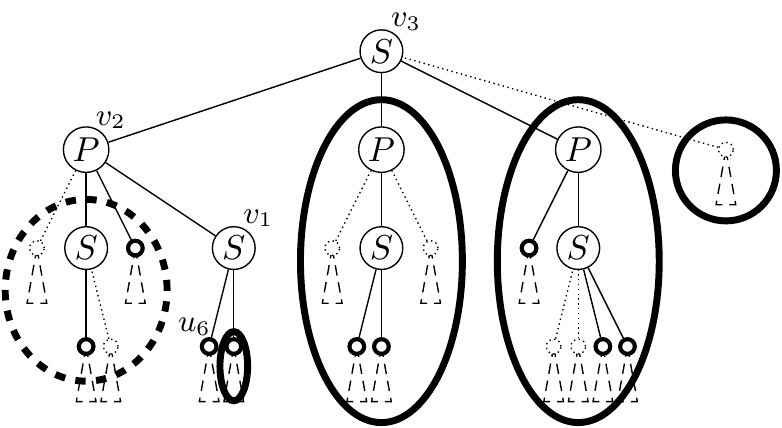}
 \caption{Rearrangement of the cotree of Figure~\ref{fig:induction-cotree1} to obtain order $\pi_6$, aimed at gathering in one interval the external neighbourhood of node $u_6$. For each strict ancestor $v$ of $u_6$, the order on the children of $v$ has been rearranged so that if $v$ is series (nodes $v_1,v_3$ here), the child of $v$ which is an ancestor of $u_6$ is placed first ({\em i.e.} on the left, on the drawing) and if $v$ is parallel (node $v_2$ here), this child is placed last ({\em i.e.} on the right). The order $\pi_6$ is then obtained by reading the nodes of $U_{min} \cup U_{\leq k-2}$ (small circles, bold or dashed, at the top of dashed triangles) from left to right on the drawing: nodes that have a parallel ancestor with $u_6$ (circled by one dashed bold ellipse) appear on the left of $u_6$ and nodes that have a series ancestor with $u_6$ (circled by four bold ellipses) appear on the right of $u_6$.}
  \label{fig:induction-cotree3}
\end{figure}

For a node $w\in U^j_{\leq k-2}$, the situation is slightly more complicated and we consider two cases.
\begin{itemize}
  \item If the father of $w$, denoted $w'$, is a parallel node, then, as previously, the external neighbourhood of vertices of $G_w$ is an interval of $\sigma_j$. Indeed, this external neighbourhood is exactly the set of leaves contained in the subtrees of the children $v$ of the series ancestors of $w$ (which are all strict ancestors of $w'$) such that $v$ is not itself an ancestor of $w$. But, as $w'$ is an ancestor of $u_j$, thanks to the order $\pi_j$ chosen above, this set of leaves is an interval containing the last element of $\sigma_j$. This interval is the one we associate, in $\sigma_j$, to all the vertices of $G_w$.

 \item If the father of $w$, denoted $w'$, is a series node, then the external neighbourhood of vertices of $G_w$ is not an interval of $\sigma_j$ but almost: it is the union of two intervals of $\sigma_j$. Let us distinguish three parts in the external neighbourhood of the vertices of $G_w$. The first part, denoted $A$, is the set of leaves descending from the children $v$ of the series nodes being strict ancestors of $w'$ such that $v$ is not itself an ancestor of $w$. As in the parallel case above, thanks to the choice we made for order $\pi_j$, $A$ is an interval containing the last element of $\sigma_j$. The second part, denoted $B$, is the set of leaves descending from the children of $w'$ that come after $w$ in the order chosen for $\pi_j$. Clearly, $B$ is an interval of $\sigma_j$ and from the definition of $\pi_j$, $B\cup A$ is exactly the interval of vertices, denoted $I_{>w}$, that are on the right of the interval of $G_w$ in $\sigma_j$. This interval $I_{>w}$ is the one we associate to vertices of $G_w$ in $\sigma_j$. Note that it contains the last element of $\sigma_j$. The last part of the external neighbourhood of the vertices of $G_w$ is denoted $I_{<w}$ and is made of the set of leaves descending from the children of $w'$ that precede $w$ in the order chosen for $\pi_j$. As $A$, $I_{<w}$ is an interval of $\sigma_j$, but this part of the external neighbourhood of the vertices of $G_w$ is not covered in $\sigma_j$. This will be done in the additional order $\sigma_{2k+1}$.
\end{itemize}

Before we describe order $\sigma_{2k+1}$, for the purpose of the $2^{nd}$ step of the induction, note that again, the interval of external neighbours associated to any node $w\in U^j_{\leq k-2}$, for any $j\in\int{1,l}$, contains the last vertex of order $\sigma_j$.

We now define the order $\pi_{2k+1}$ used to build order $\sigma_{2k+1}$, using the partition of $T'_u$ into paths $C_i$ introduced earlier. To define $\pi_{2k+1}$, for any node $u'\in U_{k-1}\setminus U_{min}$, we use the same order on the children of $u'$ as the one used for $\pi_i$, with $i\in\int{1,l}$ such that $u'\in C_i$. This ensures that for any node $w\in U_{\leq k-2}$ whose parent $w'$ is a series node of $C_i$, the interval $I_{<w}$ of external neighbours which was not covered in order $\sigma_i$ (note that since $w'\in C_i$ then $w\in U^i_{\leq k-2}$) will also be an interval of $\sigma_{2k+1}$. This is precisely the interval we assign to vertices of $G_w$ in $\sigma_{2k+1}$, which is possible as their internal neighbourhood will be entirely satisfied in the $2k$ first orders, as described below.

\medskip
\noindent\textbf{Internal neighbourhoods and choice of $\sigma^w_j$'s.}
The orders $\sigma^w_j$ used for the vertices of $G_w$, with $w\in U_{min} \cup U_{\leq k-2}$, in order $\sigma_j$, with $j \in \int{1,2k}$, are chosen as follows.

\begin{itemize}
  \item For any node $u_i\in U_{min}$, with $i\in\int{1,l}$, and all $j \in \int{1,2k}$,
  \begin{itemize}
    \item if $j=i$, then we can take any arbitrary order for the vertices of $G_{u_i}$. Indeed, in $\sigma_i$, the vertices of $G_{u_i}$ have already been assigned an interval made only of their external neighbours (see above), meaning that this interval does not contain any vertex of $G_{u_i}$.
    \item If $j\neq i$, the order on the vertices of $G_{u_i}$ is $\sigma_j(u_i)$ and the interval associated to vertices of $G_{u_i}$ in $\sigma_j$ is the same as the one associated to them in $\sigma_j(u_i)$.
  \end{itemize}
  \item For a node $w\in U_{\leq k-2}$ and all $j \in \int{1,2k}$, the order we choose for the vertices of $G_w$ depends on the path $C_i$, with $1\leq i\leq l$, of the partition $\cur{P}$ to which belongs the father of $w$.
  \begin{itemize}
    \item If $j=i$ then we use any arbitrary order for the vertices of $G_w$. Again, in $\sigma_i$, the vertices of $G_{w}$ have already been assigned an interval made only of their external neighbours (see above) and therefore it does not contain any vertex of $G_{w}$.
    \item If $j<i$ (resp. if $j>i$) then we use the order $\sigma_j(w)$ (resp. $\sigma_{j-1}(w)$), and the interval of $\sigma_j$ associated to the vertices of $G_w$ is the same as the one associated to them in $\sigma_j(w)$ (resp. $\sigma_{j-1}(w)$).
  \end{itemize}
\end{itemize}

In this way, for any $i\in\int{1,l}$ and for any $w\in U^i_{\leq k-2}$, since $G_w$ needs only $2k-1$ orders to be encoded, all the internal adjacencies of vertices of $G_w$ have been covered by the intervals associated to them in orders $\sigma_j$, for $1\leq j\leq 2k$ and $j\neq i$. For nodes $u_i$, $1\leq i\leq l$, the situation is the same: only $2k-1$ orders, namely the orders $\sigma_j$ for $1\leq j\leq 2k$ and $j\neq i$, have been used to encode the internal neighbourhoods of $G_{u_i}$. But unfortunately, since $u_i$ is minimally of factorial rank $k-1$, the recursion hypothesis only guarantees that $lin(G_{u_i})\leq 2k$. Then, one more interval is needed to fully cover the internal neighbourhood of vertices of $G_{u_i}$. For this, we use one additional order $\sigma_{2k+1}$.

Actually, we already used order $\sigma_{2k+1}$ in what precedes, in order to cover the external neighbourhood of some vertices. To this purpose, we fixed the order $\pi_{2k+1}$ in which the intervals of vertices of $G_w$, for $w\in U_{min} \cup U_{\leq k-2}$, appear in $\sigma_{2k+1}$. But we still have the liberty of choosing the orders $\sigma^w_j$ on the vertices of $G_w$, for all $w\in U_{min} \cup U_{\leq k-2}$. We use this possibility for each node $u_i\in U_{min}$: we choose the order on the vertices of $G_{u_i}$ in $\sigma_{2k+1}$ as being $\sigma_i(u_i)$, the one which has not been used until now, and the interval associated to vertices of $G_{u_i}$ in $\sigma_{2k+1}$ is the same as the one associated to them in $\sigma_i(u_i)$. This is possible as the external neighbourhood of vertices of $G_{u_i}$ has already been entirely satisfied before, in order $\sigma_i$.

Thus, using the $2k+1$ orders described above, both the internal and the external neighbourhoods of the vertices of $G_w$, for all $w\in U_{min} \cup U_{\leq k-2}$, have been covered. Since $\set{V(G_w)}_{w\in U_{min} \cup U_{\leq k-2}}$ is a partition of the vertices of $G_u$, this proves that $lin(G_u)\leq 2k+1$ and this achieves the $1^{st}$ step of the induction. Also remember, as we use it in the $2^{nd}$ step of the induction described below, that in the model we built for $G_u$, for any vertex $x$ there exists an index $j\in\int{1,2k+1}$ such that the interval associated to $x$ in $\sigma_j$ contains the last vertex of $\sigma_j$.

\medskip
\noindent\textbf{$2^{nd}$ step: node $v$ minimally of factorial rank $k$.}\\
In order to finish the induction step and then the proof of Lemma~\ref{lem:rank-lin}, we now show that for a node $v$ minimally of factorial rank $k$ ({\em i.e.}, whose children have factorial rank at most $k-1$), we have $lin(G_v)\leq 2k+2$.

First consider the case where $v$ is a parallel node. In this case, from Remark~\ref{remark:parallel}, the linearity of $v$ is the maximum of the linearity of its children. Since the children of $v$ all have factorial rank at most $k-1$, it follows from the $1^{st}$ step of our induction that their linearity is at most $2k+1$. Consequently, we have $lin(G_v)\leq 2k+1$, and then in particular $lin(G_v)\leq 2k+2$.

Let us now consider the case where $v$ is a series node and let us denote $v_1,v_2,\ldots,v_l$, with $l\in\mathbb{N}$, the children of $v$. From what precedes, all of them have linearity at most $2k+1$ and for each $i\in\int{1,l}$ we have a $(2k+1)$-line-model of $G_{v_i}$ denoted $(\sigma_j(v_i))_{j \in \int{1,2k+1}}$. A remarkable property of this $(2k+1)$-line-model, which we have constructed above, is that for any vertex $x$ of $G_{v_i}$, there exists an index $j\in\int{1,2k}$ such that the interval associated to $x$ in $\sigma_{j}(v_i)$ contains the last vertex of $\sigma_{j}(v_i)$. 
For each vertex $x$, we denote $ind(x)$ such an index $j$. We now use this property in order to construct a $(2k+2)$-line-model of $G_v$, which we denote $(\sigma_1,\ldots,\sigma_{2k+1},\sigma_{2k+2})$.

For any $j\in\int{1,2k+1}$, the order $\sigma_j$ used for $v$ is simply the concatenation (denoted $+$) of the orders of the $(2k+1)$-line-models of its children, from left to right in increasing value of the index. More explicitly, for all $j\in\int{1,2k+1}$, we define $\sigma_j$ as $\sigma_j=\sigma_j(v_1)+\sigma_j(v_2)+\ldots+\sigma_j(v_l)$. For any $i\in\int{1,l}$ and for any vertex $x$ of $G_{v_i}$, if $j\neq ind(x)$, the interval associated to $x$ in $\sigma_j$ is the same as the one associated to $x$ in $\sigma_j(v_i)$. On the other hand, if $j=ind(x)$, as the interval associated to $x$ in $\sigma_{ind(x)}(v_i)$ contains the last vertex of $\sigma_{ind(x)}(v_i)$, in the order $\sigma_{ind(x)}$ of the model of $G_v$, we extend this interval on the right by including the vertices of $G_{v_{i'}}$ for all $i'>i$. As $v$ is a series node, all these vertices are indeed adjacent to $x$ (see Figure~\ref{fig:model}).

\begin{figure}
  \begin{center}
  \input{./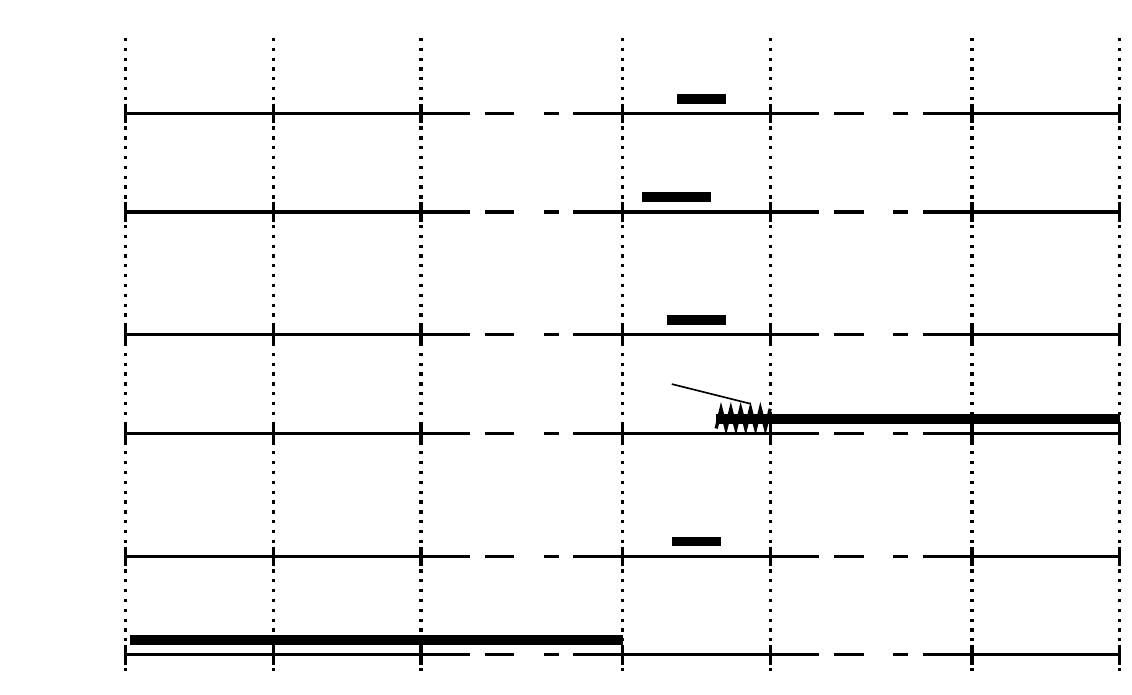_t}
\end{center}
  \caption{Illustration of the $(2k+1)$-line-model of $G_v$ for $v$ minimally of factorial rank $k$, and of the intervals associated to a vertex $x$ of $G_{v_i}$, for some $i\in\int{1,l}$. The interval associated to $x$ in an order $\sigma$ is denoted $I_{\sigma}(x)$.}
  \label{fig:model}
\end{figure}

In this way, for any $i\in\int{1,l}$ and for any vertex $x$ of $G_{v_i}$, the internal neighbourhood of $x$ is entirely covered in the orders $\sigma_1,\ldots,\sigma_{2k+1}$. Regarding the external neighbourhood of $x\in V(G_{v_i})$, note that it can be expressed as $\bigcup_{i'\in\int{1,l} \text{ and } i'\neq i}V(G_{v_{i'}})$. The part $\bigcup_{i'>i}V(G_{v_{i'}})$ is already covered in order $\sigma_{ind(x)}$. Then, only the part $\bigcup_{i'<i}V(G_{v_{i'}})$ of the external neighbourhood of $x$ remains to be covered. This is the purpose of order $\sigma_{2k+2}$ which we define as follows. For $i\in\int{1,l}$, we take any arbitrary order $\sigma_{arb}(i)$ on the vertices of $G_{v_i}$ and we build $\sigma_{2k+2}$ as $\sigma_{2k+2}=\sigma_{arb}(1)+\sigma_{arb}(2)+\ldots+\sigma_{arb}(l)$. Then, for any $i\in\int{1,l}$ and for any vertex $x$ of $G_{v_i}$, we associate to $x$ the interval of $\sigma_{2k+2}$ made of the vertices of $\bigcup_{i'<i}V(G_{v_{i'}})$ (see Figure~\ref{fig:model}). Doing so, the entire external neighbourhood of all the vertices of $G_{v}$ are covered in the $2k+2$ orders we defined. Thus, $(\sigma_1,\ldots,\sigma_{2k+1},\sigma_{2k+2})$ is a $(2k+2)$-line-model of $G_{v}$ which is then of linearity at most $2k+2$.

This completes the induction step and the proof of Lemma~\ref{lem:rank-lin}.
\end{proof}

\section{Main results}\label{sec:num}

The first result we derive from Lemma~\ref{lem:rank-lin} is a tight upper bound on the worst-case linearity of cographs on $n$ vertices. Until now, the best known upper bound \cite{CG14} was $O(\log n)$, and \cite{CG14} also exhibits some cograph families having a linearity up to $\Omega(\log n/\log\log n)$. Here, we show a new upper bound of $\O(\log n/\log\log n)$ that matches the lower bound of \cite{CG14}. This is a direct consequence of Lemma~\ref{lem:rank-lin} and of the fact that a double factorial tree of height $h$ has $\Omega(h!)$ leaves.

\begin{theorem}\label{th:upper}
For any cograph $G$ on $n$ vertices, we have $lin(G)=\O(\log n/\log\log n)$, and this upper bound is tight.
\end{theorem}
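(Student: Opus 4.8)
The plan is to combine Lemma~\ref{lem:rank-lin}, which bounds $lin(G)$ linearly in the factorial rank of the cotree of $G$, with a counting argument showing that a cotree on $n$ leaves has factorial rank only $O(\log n/\log\log n)$. The tightness part will then come for free from the known lower bound.

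I would start by letting $T$ be the cotree of $G$, with root $r$, so that $G=G_r$, and set $k=factrank(T)=factrank(r)$. Lemma~\ref{lem:rank-lin} applied at $r$ gives $lin(G)=lin(G_r)\leq 2k+3$, so the whole problem reduces to proving $k=O(\log n/\log\log n)$. For this I would use the definition of factorial rank: there is a double factorial tree $F^k$ of height $k$ that is a minor of $T$. The elementary fact I need is that an edge contraction never increases the number of leaves of a rooted tree. Indeed, contracting $uv$ with $u$ the father of $v$ either removes an internal node $v$ (the leaf set is unchanged) or removes a leaf $v$, in which case $u$ may become a leaf but the total count does not grow. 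Since $T$ has exactly $n$ leaves (the vertices of $G$), the minor $F^k$ has at most $n$ leaves. A one-line induction on Definition~\ref{def:fact-tree} shows that $F^h$ has $\prod_{i=1}^{h}(2i+1)=(2h+1)!!\geq h!$ leaves, which is the $\Omega(h!)$ fact already invoked in Section~\ref{sec:fact}; hence $k!=O(n)$.

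It then remains to convert $k!=O(n)$ into the stated bound on $k$. Taking logarithms and using Stirling's estimate $\log(k!)=k\log k-k+O(\log k)=\Theta(k\log k)$, I get $k\log k=O(\log n)$, say $k\log k\leq C\log n$. From this I would deduce $k=O(\log n/\log\log n)$ by a short contradiction argument: bounded $k$ satisfies the bound trivially, so I may assume $k\to\infty$; if $k>2C\log n/\log\log n$ then, for $n$ large, $\log k\geq\tfrac12\log\log n$ and therefore $k\log k>C\log n$, contradicting the inequality above. Plugging $k=O(\log n/\log\log n)$ back into $lin(G)\leq 2k+3$ yields $lin(G)=O(\log n/\log\log n)$.

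Finally, for tightness, I would simply recall that \cite{CG14} exhibits a family of cographs whose linearity is $\Omega(\log n/\log\log n)$; since this matches the upper bound just proved, the worst-case linearity of cographs is $\Theta(\log n/\log\log n)$ and the bound is tight. The only genuinely delicate point of the argument is the asymptotic inversion $k!=O(n)\Rightarrow k=O(\log n/\log\log n)$ (and getting the constant-chasing in the contradiction step clean); the leaf-counting of minors and the application of Lemma~\ref{lem:rank-lin} are routine.
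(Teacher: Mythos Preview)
Your argument is correct and follows essentially the same route as the paper's proof: apply Lemma~\ref{lem:rank-lin} to reduce to bounding the factorial rank $k$, compare $n$ to the number of leaves of $F^k$, invert the resulting factorial-type inequality via Stirling, and cite \cite{CG14} for tightness. Your justification that edge contractions do not increase the leaf count is a detail the paper leaves implicit, and your inversion step (contradiction from $k\log k\leq C\log n$) differs only cosmetically from the paper's use of the monotonicity of $x/\log x$.
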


\begin{proof}
Let $T$ denote the cotree of $G$ and $k=factrank(T)$. From Lemma \ref{lem:rank-lin}, the linearity of $G$ is in $\O(k)$. Let us now show that $k=\O(\log n/\log\log n)$, which will conclude this proof. According to the definition of factorial rank, $G$ has at least as many vertices as the number of leaves of the double factorial tree of height $k$, which has $\prod_{i=0}^{k} (2i+1)$ leaves. It follows from Stirling's approximation of factorial that
  $$n \geq \prod \limits_{i=0}^{k} (2i+1) = \frac{(2(k+1))!}{2^{k+1}(k+1)!} \geq \frac{2\sqrt{\pi}}{e} \left(\frac{2(k+1)}{e}\right)^{k+1}$$
  and consequently
  $$\log n \geq (k+1) \left( \log(k+1) + \log\left(\frac{2}{e}\right) \right) + \log\left(\frac{2\sqrt{\pi}}{e}\right) \geq (k+1) \big( \log(k+1)-1 \big).$$
  As $x \geq y >1$ implies $\frac{x}{\log x} \geq \frac{y}{\log y}$, we have
  $$\frac{\log n}{\log\log n} \geq \frac{(k+1) \big( \log(k+1)-1 \big)}{\log(k+1)+\log \big( \log(k+1)-1 \big)}$$
  and it follows that $k = \O(\log n / \log\log n)$.\\
And finally, as \cite{CG14} exhibits some cographs having linearity $\Omega(\log n / \log\log n)$, consequently, the upper bound provided by the lemma is tight. 
\end{proof}

We now prove the main result aimed by this paper: linearity is a strictly more powerful encoding than contiguity, which means, from Remark~\ref{rem:more-power}, that there exists some graph families for which the linearity is asymptotically negligible in front of the contiguity (hereafter denoted $cont(G)$ for a graph $G$).

\begin{theorem}\label{th:lin-cont}
There exists a series of graphs $G_h$, $h\geq 1$, such that $cont(G_h)/lin(G_h)$ tends to infinity when $h$ tends to infinity. 
\end{theorem}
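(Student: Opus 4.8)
The plan is to avoid building any new graph family from scratch and instead reuse the cograph family that already witnesses the contiguity lower bound in \cite{CG14}. First I would recall that \cite{CG14} exhibits a family of cographs $(G_h)_{h\geq 1}$, on $n=n(h)$ vertices with $n(h)\to\infty$ as $h\to\infty$, whose contiguity satisfies $cont(G_h)=\Omega(\log n)$. The crucial point is that the matching upper bound on linearity proved in Theorem~\ref{th:upper} is \emph{universal}: it holds for \emph{every} cograph. Hence the very same family automatically satisfies $lin(G_h)=\O(\log n/\log\log n)$, and no coordination between the two bounds (beyond both graphs being cographs) is required.

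With both estimates in hand, the result reduces to a one-line computation of the ratio. There exist constants $c_1,c_2>0$ such that $cont(G_h)\geq c_1\log n$ and $lin(G_h)\leq c_2\,\log n/\log\log n$ for all sufficiently large $h$, whence
$$\frac{cont(G_h)}{lin(G_h)}\ \geq\ \frac{c_1\log n}{c_2\,\log n/\log\log n}\ =\ \frac{c_1}{c_2}\,\log\log n.$$
Since $n=n(h)\to\infty$, the right-hand side diverges, so $cont(G_h)/lin(G_h)\to\infty$ as $h\to\infty$, which is exactly the claim. By Remark~\ref{rem:more-power}, this simultaneously yields the paper's headline statement that linearity is a strictly more powerful encoding than contiguity.

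I do not expect a genuine obstacle in this final step, precisely because the conceptual difficulty has been absorbed by Lemma~\ref{lem:rank-lin} and Theorem~\ref{th:upper}. The only points that demand care are essentially bookkeeping: making sure the family imported from \cite{CG14} has a number of vertices that genuinely grows with $h$ (so that $\log\log n$ actually tends to infinity rather than stabilising), and being explicit that the $\Omega(\log n)$ contiguity lower bound and the $\O(\log n/\log\log n)$ linearity upper bound are expressed relative to the \emph{same} parameter $n=|V(G_h)|$, so that the $\log n$ factors legitimately cancel. If one wishes to be fully self-contained rather than quoting \cite{CG14}, a natural alternative is to let $G_h$ be the cograph whose cotree is the double factorial tree $F^h$ of Definition~\ref{def:fact-tree}: such a tree has $\Theta(h!)$ leaves, so $n=\Theta(h!)$ gives $\log n=\Theta(h\log h)$ and $\log\log n=\Theta(\log h)$, and the construction underlying the $\Omega(\log n)$ contiguity bound for cographs then applies, recovering the divergence of $\log\log n$ from first principles.
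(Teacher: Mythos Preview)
Your main argument is correct and is exactly the paper's proof: the paper takes $G_h$ to be the cograph whose cotree is the complete binary tree of height $h$ (so $n=2^h$), invokes $cont(G_h)=\Theta(\log n)$ from \cite{CG14}, applies Theorem~\ref{th:upper} to get $lin(G_h)=O(\log n/\log\log n)$, and divides.

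Your closing alternative, however, does not work. If the cotree is the double factorial tree $F^h$, then its height is $h$ and the largest complete binary tree minor also has height $h$; the contiguity bound from \cite{CG14} is governed by this rank, so one only obtains $cont=\Theta(h)$. Since $n=\Theta((2h+1)!!)$ gives $\log n=\Theta(h\log h)$, this is $cont=\Theta(\log n/\log\log n)$, \emph{not} $\Omega(\log n)$. Meanwhile Lemma~\ref{lem:rank-lin} yields $lin\leq 2h+3=\Theta(h)$ as well, so the ratio $cont/lin$ is bounded for this family. The point is precisely that the complete binary cotree maximises contiguity relative to $n$ while Theorem~\ref{th:upper} caps linearity uniformly; the double factorial cotree instead minimises $n$ relative to factorial rank and so collapses the gap.
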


\begin{proof}
For $h\geq 1$, let $G_h$ be the connected cograph whose cotree is a complete binary tree of height $h$ and let $n=2^h$ denote the number of vertices of $G_h$. It is proven in \cite{CG14} that $cont(G_h)=\Theta(\log n)$ and that $lin(G_h)=\Omega(\log n/\log\log n)$. Then, Theorem~\ref{th:upper} above implies that $lin(G_h)=\Theta(\log n/\log\log n)$ and therefore $cont(G_h)/lin(G_h)=\Theta(\log\log n)$, which achieves the proof.
\end{proof}

\section{Perspectives}
In this paper, we showed that linearity provides a strictly more powerful encoding for graphs than contiguity does, meaning that the ratio between the contiguity and the linearity of a graph is not bounded by a constant. From a practical point of view, the meaning of our result is that using several orders, instead of just one, for grouping neighbourhoods of vertices can greatly enhance compression rates in some cases.


We obtained our main result by exhibiting a graph family, namely a subfamily of cographs, for which the ratio between the contiguity and the linearity tends to infinity as fast as $\Omega(\log\log n)$, with $n$ the number of vertices in the graph.
As a by-product of our proof, but meaningful in itself, we also showed tight bounds for the worst-case linearity of cographs on $n$ vertices; tight bounds were previously known for contiguity.
Several questions naturally arise from these results and others.

\begin{openq}
What is the worst case contiguity and the worst-case linearity of arbitrary graphs?
\end{openq}

It is straightforward to see that both of these values are bounded by $n/2$. Conversely, since there are $2^{n(n-1)/2}$ graphs on $n$ labelled vertices and since contiguity and linearity do not depend on the labels of the vertices, then both encodings must use at least $n^2$ bits for graphs on $n$ vertices. Moreover, when the value of the parameter is $k$, the size of the corresponding encoding is $O(k\, n)$ integers, that is $O(k\, n \log n)$ bits. Consequently, both parameters must be at least $\Omega(n/\log n)$ in the worst case. For contiguity, \cite{GavoillePeleg1999} gave an upper bound asymptotically equivalent to $n/4$. Is $\Omega(n)$ indeed the worst-case contiguity of a graph? Is the worst-case for linearity the same as the one for contiguity? Another appealing question which is closely related is the following.

\begin{openq}
For arbitrary graphs, what is the maximum gap between contiguity and linearity?
\end{openq}

In other words, let $(G_n)_{n\geq 1}$ be a family of graphs on $n$ vertices and let $f(n)=cont(G_n)/lin(G_n)$. Can $f(n)$ tends to infinity faster than $\Omega(\log\log n)$? What is the maximum asymptotic growth possible for $f(n)$?
Answering those questions would be both theoretically and practically of key interest for the field of graph encoding.

\section*{References}

\bibliographystyle{elsarticle-harv}
\bibliography{cographLinearity}

\end{document}